\documentclass[11pt]{amsart}
\usepackage{tikz-cd}
\usepackage{verbatim}
\usepackage[colorinlistoftodos]{todonotes}
\usepackage{mathtools}
\usepackage[bookmarks,colorlinks,breaklinks]{hyperref}  
\usepackage[top=2cm, bottom=4.5cm, left=3cm, right=3cm]{geometry}
\hypersetup{linkcolor=blue,citecolor=blue,filecolor=blue,urlcolor=blue} 
\usepackage{tikz}\usetikzlibrary{cd,fit,matrix,arrows,decorations.pathmorphing}
\tikzset{commutative diagrams/.cd}
\usepackage{bm}

\numberwithin{equation}{section}

\newtheorem{theorem}{Theorem}[section]

\newtheorem{lemma}[theorem]{Lemma}

\theoremstyle{definition}

\newtheorem{definition-theorem}[theorem]{Definition-Theorem}
\newtheorem{example}[theorem]{Example}

\newtheorem{remark}[theorem]{Remark}

\theoremstyle{remark}

\newtheorem*{remark*}{Remark}

\usepackage{enumitem}
\setenumerate[1]{label=\textup{(\alph*)}}
\setenumerate[2]{label=\textup{(\roman*)}}

\begin{document}
\title{Frequencies of subwords in words of linear subword complexity}

\author{Jason Bell}
\address{University of Waterloo \\
Department of Pure Mathematics \\
Waterloo, Ontario \\
N2L 3G1, Canada}
\email{jpbell@uwaterloo.ca}

\author{Laindon Burnett}
\address{University of Waterloo \\
Department of Pure Mathematics \\
Waterloo, Ontario \\
N2L 3G1, Canada}
\email{lcburnett@uwaterloo.ca}

\author{Chris Schulz}
\address{University of Waterloo \\
Department of Pure Mathematics \\
Waterloo, Ontario \\
N2L 3G1, Canada}
\email{chris.schulz@uwaterloo.ca}
\keywords{subword frequency, infinite words, Rauzy graphs, combinatorics on words, Sturmian words}
\subjclass{68R15,
11B85}


\begin{abstract}
Using a method of Balková--Pelantová, we show that if ${\bf w}$ is a right-infinite word over a finite alphabet, then for each nonnegative integer $N$ there are at most $3(p_{\bf w}(N+1)-p_{\bf w}(N))+1$ distinct upper (and likewise lower and ordinary when they exist) frequencies for length-$(N+1)$ subwords of ${\bf w}$, where $p_{\bf w}(n)$ is the subword complexity function of $n$. In particular, this gives a uniform upper bound when ${\bf w}$ has linearly bounded subword complexity.  We provide examples showing that whenever $f(n)$ is a weakly increasing function tending to infinity, there is a word ${\bf w}$ such that the number of subwords of length $n$ is $O(nf(n))$ and for which the limit supremum of the number of distinct upper frequencies of length-$N$ subwords of ${\bf w}$ as $N\to\infty$ is infinite.\end{abstract}

\maketitle

\section{Introduction}

Let ${\bf w}=u_1u_2u_3\cdots $ be a right-infinite word over a finite alphabet $\Sigma$.  
Given a word $u \in \Sigma^*$, it is natural to study the frequency with which $u$ occurs in ${\bf w}$ as this gives insight into the complexity of the word and the extent to which patterns recur in ${\bf w}$. More precisely, we define the \emph{frequency} of a length-$m$ word $u$ in ${\bf w}$ to be the limit
$$\lim_{n\to\infty} \frac{\#\{p\le n\colon u_{p}\cdots u_{p+m-1}=u\}}{n},$$ if it exists.  When it exists, we let ${\rm freq}_u({\bf w})$ denote this quantity.  If we replace the limit by a limsup or liminf we obtain respectively the \emph{upper frequency} and \emph{lower frequency} of $u$, which we denote
$\overline{{\rm freq}}_u(w)$ and $\underline{{\rm freq}}_u(w)$ respectively, and these quantities always exist.

Frequencies often shed insight into the underlying structure of a word.  For example, an infinite word ${\bf w}$ is periodic if and only if there is some positive integer $p$ such that each length-$p$ subword of ${\bf w}$ occurs with frequency $1/p$.  In general, subword frequencies have some relation with the \emph{subword complexity} function, $p_{\bf w}$ of an infinite (or finite) word, which is a function that takes a natural number $n$ as input and outputs $p_{\bf w}(n)$, the number of subwords of ${\bf w}$ of length $n$.  (This function is zero for $n$ larger than the length of ${\bf w}$ in the case when ${\bf w}$ is finite.) It is clear that if $p_{\bf w}$ grows quickly with $n$ then most length-$n$ subwords cannot occur too frequently.  

In addition, there are several fundamental questions about the complexity of the frequencies as real numbers, which often reveal how well behaved the word is. For example, in the case of automatic words, the frequencies of subwords (when they exist) are always rational \cite{Cob} and the upper and lower frequencies are always rational \cite{Bellupper}.  In the case of morphic words, the frequencies, when they exist, are algebraic numbers (see Allouche and Shallit \cite[Theorem 8.4.5]{AS}).  Numbers with transcendental frequencies can exist. For example, the infinite binary word ${\bf w}=u_1u_2u_3\cdots $ in which $u_n=1$ precisely when there is a positive integer $j$ such that $n=\lfloor j\pi \rfloor$ has the property that $1$ occurs with frequency $1/\pi$; however, words where subwords occur with non-algebraic frequencies are typically considered pathological.

The motivation for this paper arose from a well-known fact about Sturmian words.\footnote{A right-infinite binary word ${\bf w}$ is Sturmian when $p_{\bf w}(n)=n+1$ for all $n\ge 0$ and this is the minimal subword complexity function for words that are not eventually periodic.} It is an immediate consequence of the famous ``three-gap theorem'' of S\'os, Sur\'anyi, and Świerczkowski \cite{Sos, Suranyi, Swier} and the description of Sturmian words via irrational rotations (see, for example, \cite[Chapter 2]{Lot}) that if $d$ is a positive integer then there are at most three distinct values (which depend on $d$) that can occur as frequencies of length-$d$ subwords of a fixed Sturmian word.  This result was extended to uniformly recurrent words by Boshernitzan \cite{Bosh} (see also Balková--Pelantová \cite{Balkova}), who showed that the set of distinct frequencies of length-$N$ words could be bounded in terms of the subword complexity function. We recall that the subword complexity of a right-infinite word is \emph{linearly bounded} if there is a positive constant $C$ such that $p_{\bf w}(n)\le Cn$ for all $n$ sufficiently large.  Boshernitzan's \cite{Bosh} result shows in particular that if ${\bf w}$ is uniformly recurrent and has linearly bounded subword complexity then the number of frequencies of length-$N$ subwords is uniformly bounded.  We adapt a clever combinatorial argument of Balková--Pelantová \cite{Balkova} to extend this result to general words where we instead use upper and lower frequencies, as frequencies need not exist in a general word.

We use the notation $|u|$ to denote the length of a finite word $u$. 
\begin{theorem}
\label{thm:main}
Let ${\bf w}$ be a right-infinite word over a finite alphabet $\Sigma$ that is not eventually periodic.  
Then for each $N\in \mathbb{N}$, the sets 
$$\{\overline{{\rm freq}}_u({\bf w})\colon u\in \Sigma^*, |u|=N+1\}$$
and
$$\{\underline{{\rm freq}}_u({\bf w})\colon u\in \Sigma^*, |u|=N+1\}$$
have size at most $3(p_{\bf w}(N+1)-p_{\bf w}(N))+1$. In particular, if ${\bf w}$ has linearly bounded subword complexity then there is a constant $C$ such that for every $N$, the number of distinct upper and lower frequencies of length-$N$ subwords is bounded above by $C$.
\end{theorem}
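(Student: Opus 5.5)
The approach is to work in the Rauzy graph $G_N$ of ${\bf w}$ of order $N$. Its vertices are the length-$N$ subwords of ${\bf w}$, and there is an edge $a u\to u b$ labelled by $aub$ for each length-$(N+1)$ subword $aub$. Then $|V|=p_{\bf w}(N)$, $|E|=p_{\bf w}(N+1)$, and we set $\Delta:=|E|-|V|=p_{\bf w}(N+1)-p_{\bf w}(N)$. Reading ${\bf w}$ traces an infinite walk through $G_N$ that visits every edge, so $G_N$ is connected. I will show that the edges fall into at most $3\Delta+1$ classes, and that all edges in one class share the same upper frequency and the same lower frequency. This gives the bound for both sets.

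\textbf{Step 1 (equal frequencies along simple paths).} Let $v$ be a vertex with exactly one incoming edge $e=xv$ and one outgoing edge $f=vy$. Every occurrence of $f$ in ${\bf w}$ at a position $p\ge 2$ is immediately preceded by an occurrence of $e$ at position $p-1$, and every occurrence of $e$ is immediately followed by an occurrence of $f$. So the counts of $e$ and of $f$ among the first $n$ positions differ by at most $1$, and hence $\overline{{\rm freq}}_e=\overline{{\rm freq}}_f$ and $\underline{{\rm freq}}_e=\underline{{\rm freq}}_f$. Now cut $G_N$ at every vertex that does not have in-degree and out-degree both equal to $1$. The edges then decompose into maximal paths whose interior vertices all have in-degree and out-degree $1$, and all edges on such a path share their upper and lower frequencies. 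If no vertex is cut, then $G_N$ is a single cycle, the walk is forced, and ${\bf w}$ is periodic. This is excluded by hypothesis, so every maximal path starts at a cut vertex, and it suffices to count edges leaving cut vertices.

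\textbf{Step 2 (counting).} Every vertex has out-degree at least $1$ because ${\bf w}$ is infinite, so $\sum_v(\deg^+v-1)=\Delta$ with all terms nonnegative. For in-degrees we also have $\sum_v(\deg^-v-1)=\Delta$, but in-degree $0$ can occur. It occurs only at the length-$N$ prefix of ${\bf w}$, since any other occurrence of a factor has a predecessor, so at most one term equals $-1$. Hence $\sum_{\deg^-v\ge2}(\deg^-v-1)\le\Delta+1$.

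The maximal paths are now counted as follows.
\begin{enumerate}
\item Paths starting at a right special vertex (out-degree at least $2$): there are at most $\sum_{\deg^+v\ge 2}\deg^+v\le 2\sum(\deg^+v-1)=2\Delta$ of them.
\item Paths starting at a cut vertex of out-degree $1$: such a vertex has in-degree $\ne1$. If its in-degree is at least $2$, it is one of at most $\Delta+1$ left special vertices, and more precisely at most $\Delta$ of them unless the prefix vertex also occurs. If its in-degree is $0$, it is the unique prefix vertex.
\end{enumerate}
These contributions must be combined carefully, using the single $-1$ term from in-degree $0$, to reach exactly $3\Delta+1$ rather than something like $3\Delta+2$. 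This bookkeeping, including the overlap between right special and left special vertices, is the step I expect to need the most care. If the naive count overshoots, I would first try bounding the total number of path-starts by $\sum_v \max(\deg^+v-1,\deg^-v-1)$ plus corrections, rather than bounding each type separately.

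\textbf{Step 3 (linear complexity).} If $p_{\bf w}(n)\le Cn$, then Cassaigne's theorem gives that $p_{\bf w}(n+1)-p_{\bf w}(n)$ is bounded by a constant. Plugging this into the bound $3(p_{\bf w}(N+1)-p_{\bf w}(N))+1$ gives a uniform bound on the number of distinct upper and lower frequencies. The finitely many small $N$ are absorbed into the constant.
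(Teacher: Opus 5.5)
\textbf{There is a genuine gap in Step 2.} Your Step 1 and the overall strategy (Rauzy graph, equal frequencies along simple paths, counting path-starts) are the paper's. The count, however, cannot be finished as planned, because your intermediate claim is false: the edges of the \emph{full} Rauzy graph $G_N$ need not fall into at most $3\Delta+1$ classes.

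Let $f=0100101001001\cdots$ be the Fibonacci word, and let ${\bf w}=c\,00101001001\cdots$ be a new letter $c$ followed by $f$ with its first two letters removed. Take $N=2$.
\begin{enumerate}
\item The vertices are $c0,00,01,10$ and the edges are $c00,001,010,100,101$, so $\Delta=1$.
\item $c0$ has in-degree $0$, $00$ and $01$ have in-degree $2$, and $10$ has out-degree $2$. So every vertex is cut, and there are $5=3\Delta+2$ classes.
\end{enumerate}
Your naive count is therefore attained, and no bookkeeping can bring it down. The theorem still holds here: the values are $0,\phi^{-2},\phi^{-3},\phi^{-4}$, and $100$ and $001$ share $\phi^{-3}$. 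They lie in different classes only because $00$ is left special through the once-occurring edge $c00$.

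There is a second omission. The sets in the theorem range over all $u\in\Sigma^*$ with $|u|=N+1$, so non-factors contribute the value $0$, which your edge classes never account for. Even $3\Delta+1$ classes would therefore not directly give $3\Delta+1$ values.

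\textbf{The missing idea is to split off the value $0$ and measure branching only by infinitely-occurring extensions.} Every factor occurring finitely often, and every non-factor, has upper and lower frequency $0$. So it suffices to bound the positive values by $3\Delta$.

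Suppose $v$ has a unique infinitely-occurring left extension and a unique infinitely-occurring right extension. Then the two corresponding edges have occurrence counts differing by a bounded amount, hence equal upper and lower frequencies. This is why the paper works in the graph $\Gamma_N$ of infinitely-occurring factors. There every vertex has in- and out-degree at least $1$, with no prefix exception. Hence $\sum(|R(v)|-1)=\sum(|L(v)|-1)=\Delta p^0_{\bf w}(N)$ with nonnegative terms. The number of simple paths is then
\[
\Delta p^0_{\bf w}(N)+|{\rm RS}(N)\cup{\rm LS}(N)|\le 3\Delta p^0_{\bf w}(N).
\]

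You also need the inequality $\Delta p^0_{\bf w}(N)\le \Delta p_{\bf w}(N)$, which the full-graph approach never requires. It holds because sending a finitely-occurring length-$N$ factor to any of its right extensions injects it into the finitely-occurring length-$(N+1)$ factors, so $|T_{N+1}|\ge |T_N|$. Adding the single value $0$ then gives $3\Delta+1$.
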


A natural follow-up question is to ask what occurs when one looks beyond linearly bounded subword complexity.  In some sense, this uniform boundedness is a generic phenomenon: for example, a famous result of Borel \cite{Bor} shows that almost all real numbers\footnote{``Almost all'' in this context means outside of a set of Lebesgue measure zero.} have base-$k$ expansions that are normal; here normality for a base-$k$ expansion means that each length-$d$ subword has frequency $1/k^d$.  In particular, in a very natural sense we have that almost all right-infinite words have the property that every length-$d$ subword occurs with the same frequency. On the other hand, we give examples that show that a uniform bound on the number of frequencies of subwords of a given length does not exist once one goes even slightly beyond linearly bounded subword complexity.  Specifically, in Example \ref{exam:2}, we show that for every weakly increasing function $f(n)$ that tends to infinity there are words ${\bf w}$ with $p_{\bf w}(n)=O(nf(n))$, where no uniform bound exists on the upper frequencies.  Thus one cannot extend the ``In particular'' clause of Theorem \ref{thm:main} outside the realm of linearly bounded subword complexity. This is actually the most difficult part of the paper, since it is a difficult problem in general to construct examples of words with prescribed subword complexity having pathological properties. Part of the inspiration for our construction comes from an algebraic construction of Vishne \cite{Vishne}. In addition, we show that in general the number of distinct frequencies of length-$d$ subwords of a right-infinite word can grow exponentially in $d$ (see Example \ref{exam:1}).

The outline of this paper is as follows.  In Section \ref{sec:proof} we prove Theorem \ref{thm:main} by adapting an argument of Balková--Pelantová \cite{Balkova} using Rauzy graphs and by employing a result of Cassaigne \cite[Th\'eor\`eme 7.2]{Cass}.  In Section \ref{sec:exam} we give examples of infinite words that illustrate that the uniform bound on the number of frequencies of length-$d$ subwords is, in general, linked to the subword complexity.  

\section{Proof of Theorem \ref{thm:main}}\label{sec:proof}
In this section, we let $\Sigma=\{x_1,\ldots ,x_d\}$ denote a finite alphabet and we let ${\bf w}$ denote a right-infinite word. We note that in the special case that the complexity function $p_{\bf w}(n)$ of ${\bf w}$ is linearly bounded in $n$,
Cassaigne \cite[Th\'eor\`eme 7.2]{Cass} proves there is a positive constant $B$ such that
\begin{equation}
\label{eq:in}
\Delta p_{\bf w}(n) := p_{\bf w}(n+1)-p_{\bf w}(n) \le B\end{equation} for all $n$. 
He also proves that $B$ can be bounded explicitly in terms of $|\Sigma|$ and the rate of growth.

We shall prove Theorem \ref{thm:main}.
\begin{proof}[Proof of Theorem \ref{thm:main}]
Since ${\bf w}$ is not eventually periodic, the Morse--Hedlund theorem gives $\Delta p_{\bf w}(n)=p_{\bf w}(n+1)-p_{\bf w}(n)\ge 1$ for all $n$.
We only consider the case of upper frequencies, as the case of lower frequencies can be handled analogously. To prove this, we work with the restricted subword complexity function $p_{\bf w}^0(n)$, which counts the number of distinct length-$n$ subwords of ${\bf w}$ that occur infinitely often. 
We observe that if Inequality (\ref{eq:in}) holds for ${\bf w}$ with some positive constant $B$, then we in fact have
\begin{equation}
\label{eq:in0}
\Delta p^0_{\bf w}(n) := p^0_{\bf w}(n+1)-p^0_{\bf w}(n) \le p_{\bf w}(n+1)-p_{\bf w}(n)\end{equation} for all $n$. 
To see this, let $S_n$ denote the set of length-$n$ subwords of ${\bf w}$ and let $T_n$ denote the set of words in $S_n$ that only occur finitely many times in ${\bf w}$. Then each word in $T_n$ has at least one right prolongation in $S_{n+1}$, and since a word can only occur finitely often if each of its prolongations do, we have $|T_{n+1}|\ge |T_n|$. Then $p_{\bf w}^0(n)=|S_n|-|T_n|$, so 
$$\Delta p_{\bf w}^0(n) = |S_{n+1}| -|S_n| + |T_n|-|T_{n+1}| \le p_{\bf w}(n+1)-p_{\bf w}(n)=\Delta p_{\bf w}(n).$$

To prove our result, we note that if $u$ is not a subword of ${\bf w}$ or if it only occurs finitely many times then it has frequency $0$, and so it suffices to consider upper and lower frequencies of subwords of ${\bf w}$ that occur infinitely often. We fix a natural number $N$ and let $u_1,\ldots ,u_{p_{\bf w}^0(N)}$ denote the length-$N$ subwords of ${\bf w}$ that occur infinitely often in ${\bf w}$. 

For any finite word $u$, let 
\begin{equation}
R(u) = \{a \in \Sigma \colon ua \text{ occurs infinitely often in } {\bf w}\}
\end{equation}
 and 
 \begin{equation}
 L(u) = \{b \in \Sigma \colon bu \text{ occurs infinitely often in } {\bf w}\}
 \end{equation} denote the sets of infinitely occurring right and left extensions of $u$, respectively. We say $u$ is \emph{right-special} if $|R(u)| \ge 2$, and \emph{left-special} if $|L(u)| \ge 2$. We let $\mathrm{RS}(N)$ and $\mathrm{LS}(N)$ denote respectively the sets of right- and left-special words of length $N$ that occur infinitely often.

We now construct special subgraphs of the \emph{Rauzy graphs} (see \cite{Rauzy}) associated to ${\bf w}$. We let $\Gamma_N({\bf w})$ be the graph whose vertex set is precisely the set of vertices $V_N = \{u_1,\ldots ,u_{p_{\bf w}^0(N)}\}$. We draw a directed edge from $u_i$ to $u_j$ if there exist letters $a,b\in\Sigma$ such that $u_i a = b u_j$ and $u_i a$ occurs infinitely often in ${\bf w}$. We write $u_i\mapsto u_j$ for this directed edge, which corresponds to an out-arrow from $u_i$ and an in-arrow to $u_j$. 

Notice that the total number of edges in $\Gamma_N({\bf w})$ is exactly $p^0_{\bf w}(N+1) = \sum_{w \in V_N} |R(w)|$. Because $\sum_{w \in V_N} (|R(w)| - 1) = p^0_{\bf w}(N+1) - p^0_{\bf w}(N) = \Delta p^0_{\bf w}(N)$, the number of right-special vertices is bounded as follows:
\[ \sum_{w \in \mathrm{RS}(N)} (|R(w)| - 1) = \Delta p^0_{\bf w}(N). \]
In particular, at most $ \Delta p^0_{\bf w}(N)$ vertices have an out-degree greater than $1$. If $u_i$ is not right-special, it has a unique right extension $a_i \in \Sigma$ and a unique out-arrow $u_i \mapsto u_{\alpha(i)}$. Thus, for all but finitely many occurrences of $u_i$ in ${\bf w}$, it is immediately followed by $a_i$, yielding an occurrence of $u_{\alpha(i)}$. Hence, $\overline{{\rm freq}}_{u_i}({\bf w}) \le \overline{{\rm freq}}_{u_{\alpha(i)}}({\bf w})$ for all $u_i \notin \mathrm{RS}(N)$. By a symmetric argument on in-arrows, the number of left-special vertices satisfies $\sum_{v \in \mathrm{LS}(N)} (|L(v)| - 1) = \Delta p^0_{\bf w}(N)$, so $|\mathrm{LS}(N)| \le \Delta p^0_{\bf w}(N)$.

Following  
Balková--Pelantová \cite{Balkova}, we form the \emph{reduced Rauzy graph} $\tilde{\Gamma}_N({\bf w})$. Its vertex set $\tilde{V}_N$ is the collection of all vertices in $V_N$ that are either left-special or right-special (or both). The edges of $\tilde{\Gamma}_N({\bf w})$ correspond to \emph{simple paths} in $\Gamma_N({\bf w})$, which are directed paths $u_{i_0}\to u_{i_1}\to\cdots\to u_{i_m}$ of length $m \ge 1$ such that $u_{i_0}, u_{i_m} \in \tilde{V}_N$ and all internal vertices $u_{i_1},\ldots,u_{i_{m-1}}$ belong to $V_N \setminus \tilde{V}_N$.

Every internal vertex $u_{i_k}$ of a simple path has out-degree $1$ and in-degree $1$ in $\Gamma_N({\bf w})$. Thus, $\overline{{\rm freq}}_{u_{i_k}}({\bf w}) \le \overline{{\rm freq}}_{u_{i_{k+1}}}({\bf w})$ and $\overline{{\rm freq}}_{u_{i_{k+1}}}({\bf w}) \le \overline{{\rm freq}}_{u_{i_k}}({\bf w})$, meaning that upper frequencies are invariant along the interior of any simple path. We now claim that the number of distinct upper frequencies among all infinitely-occurring length-$(N+1)$ subwords is bounded above by the total number of edges in $\tilde{\Gamma}_N({\bf w})$.  We first observe that, since ${\bf w}$ is not eventually periodic, for every $j$ there must be at least one infinitely occurring subword of length $j$ that has at least two right continuations occurring as subwords infinitely often. Hence, starting from any vertex of $\Gamma_N({\bf w})$ and following out-arrows (respectively, in-arrows) one reaches a right-special (respectively, left-special) vertex after finitely many steps. In particular $\tilde{V}_N$ is not empty and each maximal simple path has both of its endpoints in $\tilde{V}_N$, so the map $\Phi$ below is defined on every infinitely-occurring length-$(N+1)$ factor.

To see this, we construct a map $\Phi$ from length-$(N+1)$ subwords of ${\bf w}$ that occur infinitely often to edges $\tilde{\Gamma}_N({\bf w})$ such that the preimage of each edge consists of words with the same upper frequency.  To construct our map, let $u$ be a word of length $N+1$ that occurs infinitely often. We let $v_0$ and $v_1$ denote respectively the length-$N$ prefix and length-$N$ suffix of $u$, then if $v_0$ and $v_1$ are both either left- or right-special then $v_0\to v_1$ is an edge in $\tilde{\Gamma}_N({\bf w})$ and we take $\Phi(u)$ to be this edge.  Since $v_0$ and $v_1$ are each left- or right-special, $u$ is the unique word of length $N+1$ that is mapped to this edge.
Alternatively, if $v_0$ or $v_1$ is neither left- nor right-special, then the path $v_0\to v_1$ extends uniquely to a path $$u_{i_0}\to u_{i_1}\to \cdots \to u_{i_k}\to v_0\to v_1\to u_{j_0} \to \cdots \to u_{j_m}$$ in which at most one of the subpaths
$u_{i_0}\to u_{i_1}\to \cdots \to u_{i_k}$ and $u_{j_0}\to \cdots \to u_{j_m}$ can be empty and such that only the initial and terminal vertices are either left- or right-special.
We then define $\Phi(u)$ to be the edge corresponding to this simple path.
Then one of $v_0$ and $v_1$ must be interior, and since all interior vertices have the same upper frequency, we see that all words that correspond to this path have the same upper frequency and so we have proven the claim.  

To finish the proof, we count the edges of $\tilde{\Gamma}_N({\bf w})$ in terms of the complexity difference $\Delta p^0_{\bf w}(N)$. Each right-special vertex $w$ is the source of exactly $|R(w)|$ simple paths, while a vertex $v$ that is left-special but not right-special is the source of exactly $1$ simple path. We thus obtain:
\begin{align*}
E(\tilde{\Gamma}_N) &= \sum_{w \in \mathrm{RS}(N)} |R(w)| + |\mathrm{LS}(N) \setminus \mathrm{RS}(N)| \\
&= \sum_{w \in \mathrm{RS}(N)} \bigl(|R(w)|-1\bigr) + |\mathrm{RS}(N)| + |\mathrm{LS}(N) \setminus \mathrm{RS}(N)| \\
&= \Delta p^0_{\bf w}(N) + |\mathrm{RS}(N) \cup \mathrm{LS}(N)|.
\end{align*}
Using the fact that $|\mathrm{RS}(N)| \le \Delta p^0_{\bf w}(N)$ and $|\mathrm{LS}(N)| \le \Delta p^0_{\bf w}(N)$, we obtain:
\[ E(\tilde{\Gamma}_N) \le \Delta p^0_{\bf w}(N) + |\mathrm{RS}(N)| + |\mathrm{LS}(N)| \le 3\Delta p^0_{\bf w}(N). \]
Together with the single value $0$ arising from factors that occur only finitely often, using Equation (\ref{eq:in0}) we have
there are at most $3\Delta p^0_{\bf w}(N)+1\le 3\Delta p_{\bf w}(N)+1$ distinct upper frequencies.

The result now follows, with the ``In particular'' clause following from Inequality 
\ref{eq:in}, which shows that we can take $C=3B+1$.
\end{proof}
\begin{remark} We observe that the above proof applies with minor modifications if we consider other notions of density such as logarithmic frequency or Banach frequency since intermediate vertices of simple paths of $\Gamma_N({\bf w})$ will again give subwords with identical logarithmic frequencies and Banach frequencies.
\end{remark}

\section{Examples}\label{sec:exam}
Borel's theorem \cite{Bor} shows that in some natural sense almost all words over a finite alphabet $\Sigma$ are normal and hence length-$d$ words all occur with the ``expected'' frequency of $|\Sigma|^{-d}$. On the other hand, it is reasonably straightforward to produce words in which the number of distinct frequencies of length-$d$ subwords is not uniformly bounded. For example, if ${\bf w}$ is an infinite normal word over the alphabet $\{0,1,2\}$, and $\phi: \{0,1,2\}^*\to \{a,b\}^*$ is the monoid homomorphism induced by $\phi(0)=a, \phi(1)=\phi(2)=b$, then since ${\bf w}$ is normal, each length-$d$ word over $\{0,1,2\}^*$ has frequency $1/3^d$ in ${\bf w}$. If we now consider ${\bf w}':=\phi({\bf w})\in \{a,b\}^{\omega}$ then the frequency of a length-$d$ word $u$ in $\{a,b\}^*$ in ${\bf w}'$ is easily seen to be 
$$3^{-d}\cdot \left| \left\{ v\in \{0,1,2\}^*\colon \phi(v)=u\right\}\right| = 3^{-d} \cdot 2^{u_b},$$ where $u_b$ denotes the number of occurrences of $b$ in the word $u$. In particular, there are $d+1$ distinct frequencies, namely $\{2^i/3^d \colon i=0,\ldots ,d\}$, of length-$d$ subwords of ${\bf w}'$. On the other hand, this example, while easy to produce, has a subword complexity function that grows exponentially in $n$ and produces a relatively small number of distinct frequencies of length-$n$ words as a function of $n$. We give somewhat more complicated examples, which show that the general behaviour can be very pathological.

\begin{example}
\label{exam:1}
Given a positive integer $k$, there is a right-infinite word ${\bf v}$ in which the number of distinct frequencies of length-$d$ subwords is at least $k^d$ for every $d\ge 1$.  
\end{example}
\begin{proof}
Let $\Sigma=\{0,1,\ldots ,k-1\}$. We order the non-trivial elements of $\Sigma^*$ degree-lexicographically, so $w\prec v$ if $|w|<|v|$ or if $|w|=|v|$ and $w$ is lexicographically smaller than $v$ when we read left-to-right and use the order $0\prec 1\prec \cdots \prec k-1$. Then we get an ordering 
$$0\prec 1\prec 2 \prec \cdots \prec k-1\prec 00 \prec 01 \cdots $$
and we let $w_i$ denote the $i$-th element on this list, where we begin with $w_0=0, w_1=1,\ldots$.  
We now let $4<p_1< p_2<\cdots $ be an increasing sequence of integers. We form the right-infinite word ${\bf v}=w_{q_1}xw_{q_2}xw_{q_3}x\cdots$ over $\Sigma\cup \{x\}$, in which for $i\ge 1$, $q_i = s$ where $s$ is the largest nonnegative integer for which $p_1\cdots p_s \mid i$. (By convention, we take $p_1\cdots p_s=1$ when $s=0$.) We let $c(w_i, w_j)$ denote the number of occurrences of $w_i$ in $w_j$. Then the number of occurrences of $w_i$ in the prefix $v_m:=w_{q_1}x w_{q_2} x \cdots x w_{q_m} x$ of ${\bf v}$ is equal to
$$\sum_{j=1}^m c(w_i, w_{q_j}),$$ since $w_i$ has no occurrences of $x$.
Since the number of $j$ for which $q_j=s$ is exactly $\lfloor m/p_1\cdots p_s\rfloor - \lfloor m/p_1\cdots p_{s+1}\rfloor$, we see that $v_m$ has 
$$\sum_{s=0}^{\infty} c(w_i,w_s) \cdot \left(\lfloor m/p_1\cdots p_s\rfloor - \lfloor m/p_1\cdots p_{s+1}\rfloor \right)$$
occurrences of $w_i$. Thus as $m$ tends to infinity, one can show that the number of occurrences of $w_i$ in $v_m$ is asymptotic to 
$$ m\cdot \sum_{s=0}^{\infty} c(w_i,w_s)\cdot \left(1/(p_1\cdots p_s)-1/(p_1\cdots p_{s+1})\right).$$
On the other hand, a similar analysis shows that the length of $v_m$ is asymptotic to 
$$m+\sum_{j=1}^m |w_{q_j}| \sim m\left(1 + \sum_{s=0}^{\infty} |w_s|\cdot 1/p_1\cdots p_s\right).$$
Since the ratio of the length of $v_m$ and the length of $v_{m+1}$ tends to one as $m\to\infty$, we see that the frequency of $w_i$ exists and is equal to 
$$\alpha_i:=\frac{\sum_{s=0}^{\infty} c(w_i,w_s)\cdot \left(1/(p_1\cdots p_s)-1/(p_1\cdots p_{s+1})\right)}{\left(1 + \sum_{s=0}^{\infty} |w_s|\cdot 1/p_1\cdots p_s\right)}.$$
Since the denominator in the above expression for each $\alpha_i$ is the same, to show that $\alpha_i\neq \alpha_j$ it suffices to show that the values
$\beta_i:=\sum_{s=0}^{\infty} c(w_i,w_s)\cdot \left(1/(p_1\cdots p_s)-1/(p_1\cdots p_{s+1})\right)$ are pairwise distinct for $i=0,1,\ldots$. But since $c(w_i,w_s)=1$ if $s=i$ and is $0$ for $s<i$, and since $c(w_i,w_s) \le |w_s| = O(\log_k(s+1))$, we see that if $p_{s+1}\ge \prod_{i\le s} p_i$ for all $s$, then we have the inequalities 
$1/(2p_1\cdots p_i) <\beta_i < 2/(p_1\cdots p_i)$. Thus the sequence $\beta_i$ is strictly decreasing with $i$, and so in particular, we get at least $k^d$ distinct subword frequencies of length-$d$ subwords of ${\bf v}$.  
\end{proof}

We now give a second example, which shows that once one moves outside of the realm of linearly bounded subword complexity, the conclusion to the ``In particular'' clause of Theorem \ref{thm:main} no longer holds in general.

\begin{example} \label{exam:2} Let $f(n)$ be a weakly increasing function that tends to infinity as $n\to\infty$. Then there is a right-infinite word ${\bf w}$ with $p_{\bf w}(n)=O(n f(n))$ such that the number of distinct upper frequencies of length-$d$ subwords is not uniformly bounded.
\end{example}

This construction is reasonably involved. We begin by forming a binary word
$${\bf v}=01^{
u_2(1)}01^{
u_2(2)}0 1^{
u_2(3)}\cdots,$$
where $
u_2$ is the $2$-adic valuation. Given a positive integer $n$, we let $v_n=01^{
u_2(1)}01^{
u_2(2)}\cdots 0 1^{
u_2(n)}$, whose length is given by:
\begin{equation}
|v_n| = n + u_2(1)+\cdots + u_2(n) \le 2n.
\end{equation}

Given a binary word $u$ we let $u'$ denote the word $\sigma(u)$, where $\sigma$ is the coding induced by $0\mapsto 1,\, 1\mapsto 0$. For a finite word $v$ and a subword $u$ of $v$, we write ${\rm occ}_u(v)$ for the number of occurrences of $u$ in $v$ and ${\rm freq}_u(v)={\rm occ}_u(v)/|v|$ for its frequency.

For a fixed $r$, for $n$ sufficiently large (with bound depending on $r$) and each $i\in\{2,\ldots,r\}$ there is a unique length-$n$ subword of ${\bf v}$ that begins $01^{n-i}0$, since the prefix $01^{n-i}0$ must occur in a subword of the form $01^{u_2(m)}0 1^{u_2(m+1)} 0 \cdots $ with $2^{n-i}\mid m$ and so the values of $u_2(m+1), u_2(m+2),\ldots ,u_2(m+r)$ are completely determined if $r<2^{n-r}\le 2^{n-i}$. Moreover, this unique length-$n$ word with prefix $01^{n-i}0 $occurs in ${\bf v}$ with frequency
$$2^{\,i-n}/4 \;=\; 2^{\,i-n-2}.$$ To see this, observe that the occurrences are in bijection with occurrences of $01^{u_2(m)}$ with $m$ being a multiple of $2^{n-i}$ but not of $2^{n-i+1}$.  As $N\to \infty$, the number of occurrences of such a block in the prefix $v_N$ is thus asymptotic to $N/2^{n-i} - N/2^{n-i+1}=N/2^{n-i+1}$ and since $|v_N|\sim N + u_2(1)+\cdots +u_2(N) \sim 2N$, we can see that within a length $M$ prefix, the number of occurrences is asymptotic to $M/2^{n-i+2}$, giving the claim.

(The restriction $i\ge 2$ is needed so that the prefix $01^{n-i}0$, which has length $n-i+2$, fits within a word of length $n$; uniqueness holds because, once $n-i$ is large compared to $i$, the $i$ letters following the maximal run $1^{n-i}$ are forced.) Since each $v_n$ is a prefix of ${\bf v}$ of length tending to infinity, the limsup of the upper frequency in $v_n$ of any fixed subword of ${\bf v}$ converges, as $n\to\infty$, to its upper frequency in ${\bf v}$.

We fix a weakly increasing sequence of positive integers $2\le r_0\le r_1\le r_2\le \cdots$, each a power of $2$, with $r_n\to\infty$ and
\begin{equation}
\label{eq:rn}
r_n \le f(n)\qquad {\rm for}~n~{\rm sufficiently~large.}
\end{equation}
Such a sequence exists because $f$ is weakly increasing and tends to infinity: take $r_n$ to be the largest power of two not exceeding $\max(2,f(n))$.

Given a finite binary word $v$ and two finite words $u_0,u_1$ over a possibly different alphabet, we define $v(u_0,u_1)$ to be the word obtained by substituting $0\mapsto u_0$ and $1\mapsto u_1$ in $v$. We now make a right-infinite word ${\bf w}\in \{x,y\}^{\omega}$ as follows. We let $w_1=v_{r_0}(x,y)$ and, having defined $w_n$, we set $w_{n+1}= v_{r_n}(w_n, w_n')$, where $w_n'$ is obtained from $w_n$ by applying the coding $x\mapsto y,\, y\mapsto x$. We take $v_i'$ to be the word obtained by applying $\sigma$ to $v_i$.

Since each $v_i$ is a prefix of ${\bf v}$, which begins with $0$, we see that $w_n$ is a prefix of $w_{n+1}$ for each $n$, and so we can take the limit of the $w_n$ to obtain a right-infinite word ${\bf w}$. By construction ${\bf w}\in\{w_n,w_n'\}^{\omega}$ for every $n\ge 1$.

\begin{lemma}
For all $n$ the number of occurrences of $w_n$ in $w_nw_n$, $w_nw_n'$, $w_n'w_n$, and $w_n'w_n'$ is respectively $2,1,1,$ and $0$. In particular, whenever a non-trivial word $u\in \{w_n,w_n'\}^*$ occurs as a subword of an infinite word $y_0y_1y_2\cdots $ in which $y_i\in \{w_n,w_n'\}$ for each $i$, there exist $p,q$ with $p\le q$ such that $u=y_p\cdots y_q$.
\label{lem:f0}
\end{lemma}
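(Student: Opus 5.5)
We will prove Lemma~\ref{lem:f0} by induction on $n$. The two parts feed into each other. The counts $2,1,1,0$ for $w_n$ give the alignment statement for blocks in $\{w_n,w_n'\}$. That alignment then reduces the counts for $w_{n+1}=v_{r_n}(w_n,w_n')$ to a purely binary question about $v_{r_n}$ and its complement $v_{r_n}'$.

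\emph{From the counts to alignment.} Since the coding $x\leftrightarrow y$ is an involution, the counts for $w_n$ imply the same counts for $w_n'$ in $w_n'w_n'$, $w_n'w_n$, $w_nw_n'$ and $w_nw_n$. Moreover, the occurrences sit at block boundaries. In $w_nw_n$ the two trivial occurrences (offsets $0$ and $|w_n|$) already account for the count $2$. In $w_nw_n'$ the prefix is an occurrence, and in $w_n'w_n$ the suffix is, so each single occurrence is the trivial one. Hence no occurrence of $w_n$ or $w_n'$ starts at an offset strictly inside a block of a product $y_py_{p+1}$. Now suppose $u=z_0\cdots z_t$ with $z_j\in\{w_n,w_n'\}$ occurs in $y_0y_1\cdots$, starting at offset $s$ inside $y_p$ with $0<s<|w_n|$. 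Then $z_0$ occurs in $y_py_{p+1}$ at the interior offset $s$, which is a contradiction. So $s=0$, and since all blocks have equal length, $u=y_p\cdots y_q$.

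\emph{The inductive step.} The words $w_{n+1}w_{n+1}$, $w_{n+1}w_{n+1}'$, and so on are concatenations of blocks from $\{w_n,w_n'\}$. By the alignment statement for level $n$, each occurrence of $w_{n+1}$ in them is block-aligned. Since $w_n\neq w_n'$, such an occurrence corresponds bijectively to an occurrence of the binary word $v:=v_{r_n}$ in $vv$, $vv'$, $v'v$, $v'v'$ respectively. The base case $w_1=v_{r_0}(x,y)$ is literally this binary problem. So everything reduces to the following claim: for $r=2^k$ with $k\ge1$, the word $v=v_r$ occurs $2,1,1,0$ times in $vv,vv',v'v,v'v'$.

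\emph{The binary claim, which is the main obstacle.} I would use the run structure of $v=0\,1^{\nu_2(1)}0\,1^{\nu_2(2)}\cdots 0\,1^{k}$. Every $1$-run of $v$ has length $\nu_2(m)<k$ except the final run $1^k$. That final run is preceded by a $0$ and ends the word. Also, $0$-runs in $v$ have length at most $2$, so $1$-runs in $v'$ have length at most $2$.

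An occurrence of $v$ in one of the four words must contain a run of $k$ ones preceded by $0$ and ending exactly at the end of the window. For $k\ge3$, the only $1$-runs of length $\ge k$ in $vv$ or $vv'$ are the terminal run of the first $v$-copy and of a final $v$. In $vv'$ the terminal run $1^k$ of the first copy is followed by the initial $1$ of $v'$. Only the prefix occurrence survives there, since ending mid-run is impossible because the preceding symbol must be a maximal-run boundary; I will check this edge carefully. In $v'v$ only the suffix run qualifies, and in $v'v'$ no run of length $k$ exists. To be safe, I would also compare letter counts: $v$ has $r$ zeros and $r-1$ ones, and $v'$ has the reverse. This rules out stray windows.

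For small $k$ ($k=1,2$, so $v=001$ or $v=0010011$), I would simply verify the four counts by hand. For example, with $k=1$ we have $vv'=001110$, $v'v=110001$ and $v'v'=110110$. I expect the delicate points to be exactly these boundary cases: the junction where the $1^k$ run of $v$ meets the leading $1$'s of $v'$, and the small values of $k$.
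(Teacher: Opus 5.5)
Your proposal is correct and follows essentially the paper's route. You induct on $n$, use the level-$n$ counts to get block alignment, and thereby reduce the level-$(n+1)$ counts to occurrences of $v_{r_n}$ in $v_{r_n}v_{r_n}$, $v_{r_n}v_{r_n}'$, $v_{r_n}'v_{r_n}$, $v_{r_n}'v_{r_n}'$; these are then settled by the uniqueness of the maximal run $1^k$ in $v_{2^k}$.

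You actually go further than the paper by working out the three mixed cases (which the paper dismisses as ``slightly simpler'') and the small cases $k\le 2$. Fix the wording in the $vv'$ case, though. The junction run there is $1^{k+2}$, and the prefix occurrence does end mid-run, so ``ending mid-run is impossible'' is not the right reason. The real point is that the window's final $1^k$ is preceded by $0$, so it must begin at the start of a maximal run, and this forces the window to be the first copy of $v$.
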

\begin{proof}
We only prove the claim for occurrences in $w_nw_n$ with the other cases being slightly simpler. We first claim that for $p\ge 1$, $v_{2^p}$ occurs exactly twice in $v_{2^p} v_{2^p}$. To see this, if this were not the case, there would be some $i\in \{1,2,\ldots ,2^p-1\}$ such that 
$v_{i,2^p}:=01^{u_2(i)}0\cdots 01^{u_2(2^p)}01^{u_2(1)}0\cdots 01^{u_2(i-1)} = v_{2^p}$. But by construction $1^{p}$ occurs exactly once in each of these words and if $i>0$ and $i<2^p$ then these occurrences cannot begin in the same positions of $v_{i,2^p}$ and $v_{2^p}$.  
We now show by induction that $w_n$ cannot occur more than twice in $w_nw_n$ with the base case following from the claim. Now assume that the claim holds for positive integers less than or equal to $m$.
Then $w_{m+1}=v_{r_{m}}(w_m,w_m')$. If $w_{m+1}$ occurs somewhere in $w_{m+1}w_{m+1}$ other than in the two obvious occurrences then our induction hypothesis gives that $w_m$ cannot occur in any of the overlaps of $w_m^2, w_m w_m', w_m'w_m, w_m'w_m'$ and so this yields an occurrence of $v_{r_{m}}$ in $v_{r_{m}} v_{r_{m}}$ not occurring at the prefix or the suffix, which we've shown cannot occur. And so the first claim follows by induction. By symmetry an analogous result holds for occurrences of $w_n'$ in $w_nw_n$, $w_nw_n'$, $w_n'w_n$, and $w_n'w_n'$. Now suppose that we have a right-infinite word $y_0y_1y_2\cdots $ in which $y_i\in \{w_n,w_n'\}$ for each $i$, and suppose that $u\in \{w_n,w_n'\}^*$ is a subword (where we are considering it as a binary word). Then if $u$ occurs with some starting position that is not equal to the starting position of some $y_p$ then there is some occurrence of $w_n$ or $w_n'$ that occurs as a subword of one of $w_nw_n$, $w_nw_n'$, $w_n'w_n$, $w_n'w_n'$ without being either a prefix or suffix. But we have shown that this is impossible.
\end{proof}

For each $n$ the word ${\bf w}$ has a unique factorization into blocks from $\{w_n,w_n'\}$; identifying $w_n$ with $0$ and $w_n'$ with $1$ turns this factorization into a binary sequence ${\bf z}^{(n)}\in\{0,1\}^{\omega}$. Since $w_{n+1}=v_{r_n}(w_n,w_n')$ and ${\bf w}\in\{w_{n+1},w_{n+1}'\}^{\omega}$, we have ${\bf z}^{(n)}\in\{v_{r_n},v_{r_n}'\}^{\omega}$. We let
\begin{equation}
\label{eq:alpha}
\alpha_n=\limsup_{N\to\infty}\frac{\#\{v_{r_n}\text{-blocks among the first }N~{\rm blocks~ of~} {\bf z}^{(n)}\}}{N},
\end{equation}
\begin{equation}
\alpha_n'=\limsup_{N\to\infty}\frac{\#\{v_{r_n}'\text{-blocks among the first }N~{\rm blocks~of~}{\bf z}^{(n)}\}}{N}.
\end{equation}
Since every block in our decomposition of ${\bf z}^{(n)}$ is either $v_{r_n}$ or $v_{r_n}'$, we have $\alpha_n+\alpha_n'\ge 1$.

\begin{lemma}\label{lem:freq2}
Let $n$ be a positive integer, and let $u$ be a factor of $v_{r_n}$ that contains the factor $111$. Then the upper frequency of $u$ in the sequence ${\bf z}^{(n)}$ (occurrences per symbol) lies in the closed interval
$$\bigl[\,{\rm freq}_u(v_{r_{n}})\cdot \alpha_n,\; {\rm freq}_u(v_{r_{n}})\cdot \alpha_n+ |u|/|v_{r_{n}}|\,\bigr].$$
\end{lemma}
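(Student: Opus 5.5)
The plan is to count occurrences of $u$ in a long prefix of ${\bf z}^{(n)}$ according to how they sit relative to the block decomposition ${\bf z}^{(n)}\in\{v_{r_n},v_{r_n}'\}^{\omega}$. Write $L=|v_{r_n}|$, and take a prefix consisting of exactly $N$ blocks, so it has length $NL$. Since the ratio of consecutive such lengths tends to one, it suffices to compute the limsup along these prefixes.

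First I will show that every occurrence of $u$ lies inside a $v_{r_n}$-block or straddles a boundary between two blocks. In particular, none lies inside a $v_{r_n}'$-block. The reason is that $v_{r_n}'=\sigma(v_{r_n})$ has $1$'s exactly where $v_{r_n}$ has $0$'s. In $v_{r_n}=01^{\nu_2(1)}0\cdots01^{\nu_2(r_n)}$, two $0$'s are consecutive whenever $\nu_2(m)=0$, i.e.\ for every odd $m$. I also need that the blocks of $1$'s between consecutive $0$'s are $1^{\nu_2(m)}$; that part is not needed here. The key check is that $v_{r_n}'$ contains no $111$: three consecutive $1$'s in $v_{r_n}'$ would mean three consecutive $0$'s in $v_{r_n}$, i.e.\ $\nu_2(m)=\nu_2(m+1)=0$, which is impossible. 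Hence an occurrence of $u$ (which contains $111$) lying entirely inside a $v_{r_n}'$-block is impossible.

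Next I count. Each $v_{r_n}$-block contributes exactly ${\rm occ}_u(v_{r_n})={\rm freq}_u(v_{r_n})\,L$ occurrences that lie wholly inside it. Occurrences that cross a block boundary begin within the last $|u|-1$ positions of some block, so there are at most $|u|$ of them per boundary (for $|u|\le L$, which holds since $u$ is a factor of $v_{r_n}$). Let $B(N)$ be the number of $v_{r_n}$-blocks among the first $N$ blocks. Then the occurrence count $C(N)$ in the prefix satisfies
\[
B(N)\,{\rm freq}_u(v_{r_n})L \;\le\; C(N)\;\le\; B(N)\,{\rm freq}_u(v_{r_n})L + N|u|.
\]
Dividing by $NL$ and taking limsup, using $\limsup B(N)/N=\alpha_n$ from \eqref{eq:alpha}, gives both endpoints of the interval.

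The main obstacle is the reduction from the limsup over all prefix lengths to the limsup over prefix lengths $NL$. For a general prefix length $M$ with $NL\le M<(N+1)L$, the count differs from $C(N)$ by at most $L$ and $M/(NL)\to1$, so the limsups agree. The other delicate point is the no-$111$ argument for $v_{r_n}'$. I would state it carefully, noting that it uses only the fact that $\nu_2$ vanishes on odd integers.
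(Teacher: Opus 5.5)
Your proposal is correct and takes essentially the same approach as the paper's proof. In both, occurrences inside $v_{r_n}'$-blocks are ruled out because $v_{r_n}'$ has no $111$, each $v_{r_n}$-block contributes exactly ${\rm occ}_u(v_{r_n})$ occurrences with at most $|u|$ more per boundary, and one divides by $N|v_{r_n}|$ and takes the limsup; your explicit check that the limsup along block-aligned prefixes equals the limsup over all prefix lengths fills in a step the paper leaves implicit.
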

\begin{proof}
Consider a prefix $y$ of ${\bf z}^{(n)}$ consisting of $B$ blocks from $v_{r_n}, v_{r_n}'$, and let $A$ denote the number of blocks that are $v_{r_n}$; it has $B\,|v_{r_n}|$ symbols. Each occurrence of $u$ in ${\bf z}^{(n)}$ either lies inside a single block or lies in an overlap between two consecutive blocks. Since $v_{r_n}$ has no run of three $0$'s, $v_{r_n}'$ has no run of three $1$'s and so since $u$ contains $111$, no occurrence of $u$ lies inside a $v_{r_n}'$ block. Hence the number of within-block occurrences of $u$ in $y$ is exactly $A\cdot {\rm occ}_u(v_{r_n})$, while the number of occurrences of $u$ in an overlap is at most $|u|$ per boundary. Hence we get at most $|u|\,B$ boundary occurrences in total. Dividing by $B\,|v_{r_n}|$ and using ${\rm freq}_u(v_{r_n})={\rm occ}_u(v_{r_n})/|v_{r_n}|$, the number of occurrences of $u$ per symbol in this prefix lies in
$$\bigl[\,(A/B)\cdot{\rm freq}_u(v_{r_n}),\; (A/B)\cdot{\rm freq}_u(v_{r_n})+|u|/|v_{r_n}|\,\bigr].$$
Taking the limsup as $B\to\infty$ gives the result.
\end{proof}

\begin{remark}
An analogous statement holds when $u$ is a factor of $v_{r_n}'$ containing $000$ (which $v_{r_n}$ cannot contain), with $\alpha_n$ replaced by $\alpha_n'$.
\end{remark}

\begin{lemma} The number of distinct upper frequencies of length-$d$ subwords of ${\bf w}$ is not uniformly bounded in $d$.
\label{lem:f2}
\end{lemma}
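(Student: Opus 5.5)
The plan is to fix a target $K$ and produce a single length $d$ with at least $K$ distinct upper frequencies among the length-$d$ subwords of ${\bf w}$. The idea is to take a level $n$ with $r_n$ very large, build $K$ binary words $u^{(i)}$ of a common length $m$ in ${\bf v}$ whose frequencies are about $2^{i-m-2}$, and lift them to ${\bf w}$ through the block substitution $0\mapsto w_n$, $1\mapsto w_n'$.

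\emph{Step 1 (choice of words).} Put $r=K+1$ and choose $m$ with $m-r\ge 3$ and $r<2^{m-r}$. For $i\in\{2,\dots,r\}$, let $u^{(i)}$ be the unique length-$m$ subword of ${\bf v}$ beginning with $01^{m-i}0$; it exists and is unique by the discussion preceding Lemma \ref{lem:f0}. Each $u^{(i)}$ contains $111$. I would next estimate ${\rm freq}_{u^{(i)}}(v_{r_n})$ directly, for $r_n$ a large power of two. Occurrences of $u^{(i)}$ in $v_{r_n}$ correspond to indices $k\le r_n$ with $\nu_2(k)=m-i$, apart from $O(1)$ boundary effects near the end of $v_{r_n}$. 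Their number is therefore $r_n/2^{m-i+1}+O(1)$. Since $r_n$ is a power of two, $|v_{r_n}|=2r_n-1$. Hence
\[
{\rm freq}_{u^{(i)}}(v_{r_n})=2^{i-m-2}+O(1/r_n),
\]
with the implied constant depending only on $m$.

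\emph{Step 2 (frequencies in ${\bf z}^{(n)}$).} Since $\alpha_n+\alpha_n'\ge 1$, one of $\alpha_n,\alpha_n'$ is at least $1/2$.
\begin{itemize}
\item If $\alpha_n\ge 1/2$, apply Lemma \ref{lem:freq2} to each $u^{(i)}$. This places the upper frequency of $u^{(i)}$ in ${\bf z}^{(n)}$ in an interval around $\alpha_n 2^{i-m-2}$ of width $O(m/r_n)$.
\item Otherwise $\alpha_n'\ge 1/2$. Then use the complemented words $u^{(i)\prime}$, which contain $000$, together with the Remark after Lemma \ref{lem:freq2}.
\end{itemize}
Consecutive centres differ by a factor of $2$, and they are bounded below by $2^{-m-1}$. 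So once $r_n$ is large enough in terms of $m$, the $K$ intervals are pairwise disjoint and the $K$ upper frequencies in ${\bf z}^{(n)}$ are distinct. Such $n$ exists because $r_n\to\infty$.

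\emph{Step 3 (lifting to ${\bf w}$).} Let $U^{(i)}=u^{(i)}(w_n,w_n')$; all these words have the common length $d=m|w_n|$. By Lemma \ref{lem:f0}, every occurrence of $U^{(i)}$ in ${\bf w}=y_0y_1\cdots$ (with $y_j\in\{w_n,w_n'\}$) starts at a block boundary. So occurrences of $U^{(i)}$ in ${\bf w}$ are in bijection with occurrences of $u^{(i)}$ in ${\bf z}^{(n)}$, and an occurrence at block index $j$ sits at position $j|w_n|$. Blocks have bounded length $|w_n|$, so the limsup over all prefix lengths of ${\bf w}$ equals the limsup over prefixes ending at block boundaries. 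Hence
\[
\overline{{\rm freq}}_{U^{(i)}}({\bf w})=\overline{{\rm freq}}_{u^{(i)}}({\bf z}^{(n)})/|w_n|.
\]
These values are $K$ distinct numbers attached to subwords of one length $d$. Since $K$ was arbitrary, the number of distinct upper frequencies is not uniformly bounded.

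The main obstacle is the quantitative estimate in Step 1. The count of occurrences of $u^{(i)}$ in the finite word $v_{r_n}$ has to be pinned down with an error that is uniform in $i\le r$ and vanishes relative to $2^{-m}$ as $r_n\to\infty$. This needs care at the right end of $v_{r_n}$, where the forced continuation after $1^{m-i}$ may be truncated. I would handle this by using only that the uniqueness argument forces the continuation whenever $k+r\le r_n$, and by absorbing the remaining $O(r)$ indices into the error term. A secondary check is that Lemma \ref{lem:freq2}'s additive error $|u|/|v_{r_n}|$ really is $o(2^{-m})$; this holds because $m$ is fixed before $n$ is chosen.
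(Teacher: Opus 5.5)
Your proposal is correct and follows essentially the same route as the paper. It uses the same words beginning with $01^{m-i}0$, applies Lemma \ref{lem:freq2} (or its Remark) to place their upper frequencies in ${\bf z}^{(n)}$ in pairwise disjoint intervals around $\alpha_n 2^{i-m-2}$, and then lifts via $u\mapsto u(w_n,w_n')$, which scales upper frequencies by $1/|w_n|$. Your minor variations are all sound: you choose between $\alpha_n$ and $\alpha_n'$ for the single chosen $n$ rather than along a subsequence, you count occurrences in $v_{r_n}$ directly, and you cite Lemma \ref{lem:f0} to justify the block-aligned bijection of occurrences.
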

\begin{proof}
By $\alpha_n+\alpha_n'\ge 1$, either $\alpha_n\ge 1/2$ for infinitely many $n$, or $\alpha_n'\ge 1/2$ for infinitely many $n$. We treat the first case; the second is handled analogously, using the Remark.

Fix an integer $r\ge 2$. Recall that for sufficiently large $m$ and for each $i\in\{2,\ldots,r\}$ there is a unique factor $u_{i,m}$ of ${\bf v}$ of length $m$ beginning $01^{m-i}0$, and ${\rm freq}_{u_{i,m}}(v_n)\to 2^{\,i-m}/4$ as $n\to\infty$. Fix an $m\ge 3+r$ having the uniqueness property described above. Since $u_{i,m}$ begins $01^{m-i}0$ with $m-i\ge m-r\ge 3$, it contains $111$. We now choose $n$ large enough so that the following hold:
\begin{itemize}
\item[(a)] $\alpha_n\ge 1/2$;
\item[(b)] $v_{r_n}$ contains each $u_{i,m}$ and ${\rm freq}_{u_{i,m}}(v_{r_n})\in\bigl(0.95\cdot 2^{\,i-m}/4,\; 1.05\cdot 2^{\,i-m}/4\bigr)$;
\item[(c)] $|u_{i,m}|/|v_{r_n}|=m/|v_{r_n}|<2^{-m}/100$.
\end{itemize}
Then by Lemma~\ref{lem:freq2}, the upper frequency of $u_{i,m}$ in ${\bf z}^{(n)}$ lies in
$$I_{i,m}:=\bigl[\,0.95\,\alpha_n\,2^{\,i-m}/4,\; 1.05\,\alpha_n\,2^{\,i-m}/4 + 2^{-m}/100\,\bigr].$$
We claim that the intervals $I_{i,m}$, $i=2,\ldots,r$, are pairwise disjoint. To see this, suppose $I_{i,m}\cap I_{j,m}$ is nonempty with $i<j$. As the intervals are ordered by their left endpoints, this forces
$$0.95\,\alpha_n\,2^{\,j-m}/4 \le 1.05\,\alpha_n\,2^{\,i-m}/4 + 2^{-m}/100.$$
Multiplying by $400\cdot 2^{m}$ gives $95\,\alpha_n 2^{j} \le 105\,\alpha_n 2^{i}+4$, i.e.
$$\alpha_n\bigl(95\cdot 2^{j}-105\cdot 2^{i}\bigr)\le 4.$$
Since $j\ge i+1$, we have $95\cdot 2^{j}-105\cdot 2^{i}\ge 2^{i}(190-105)=85\cdot 2^{i}\ge 85$, so $\alpha_n\le 4/85<1/2$, contradicting (a). Hence the $I_{i,m}$ are pairwise disjoint.

Finally, occurrences of $u_{i,m}$ in ${\bf z}^{(n)}$ are in one-to-one correspondence with occurrences of the factor $u_{i,m}(w_n,w_n')$ in ${\bf w}$, and we see that the upper frequency (per letter of ${\bf w}$) of $u_{i,m}(w_n,w_n')$ equals $1/|w_n|$ times the upper frequency of $u_{i,m}$ in ${\bf z}^{(n)}$, and these latter values lie in the pairwise disjoint intervals $I_{i,m}$. Since $1/|w_n|$ is a common positive factor, the $r-1$ words $u_{i,m}(w_n,w_n')$, all of length $m\,|w_n|$, have pairwise distinct upper frequencies. Since $r$ is arbitrary, the number of distinct upper frequencies of length-$d$ subwords of ${\bf w}$ is not bounded in $d$.
\end{proof}

\begin{lemma} Adopting the above notation, we have $p_{\bf w}(n)=O(n f(n))$.
\label{lem:f1}
\end{lemma}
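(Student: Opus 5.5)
The plan is a multi-scale counting argument. We bound the number of length-$L$ factors of ${\bf w}$ by choosing the level $n$ with $|w_n|\le L<|w_{n+1}|$. By Lemma~\ref{lem:f0} and the fact that ${\bf w}\in\{w_n,w_n'\}^{\omega}$, every length-$L$ factor of ${\bf w}$ is determined by two pieces of data. The first is a starting offset $t\in\{0,\ldots,|w_n|-1\}$ inside a $w_n$-block. The second is the binary word of consecutive blocks it covers, which is a factor of ${\bf z}^{(n)}$ of length at most $\lceil L/|w_n|\rceil+1$. Hence
\[
p_{\bf w}(L)\le |w_n|\cdot p_{{\bf z}^{(n)}}\bigl(\lceil L/|w_n|\rceil+1\bigr).
\]
So it suffices to control the complexity of ${\bf z}^{(n)}\in\{v_{r_n},v_{r_n}'\}^{\omega}$ at lengths $k\le |w_{n+1}|/|w_n|+2=|v_{r_n}|+2\le 2r_n+2$. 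Here $|v_{r_n}|\le 2r_n$ by the displayed bound on $|v_n|$.

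Next we count factors of ${\bf z}^{(n)}$ of length $k\le 2r_n+2$. Such a factor lies inside at most two consecutive blocks from $\{v_{r_n},v_{r_n}'\}$, so it is determined by a pair of blocks (four choices) and an offset (at most $|v_{r_n}|\le 2r_n$ choices). This gives $p_{{\bf z}^{(n)}}(k)\le 8r_n+O(1)$. Combining,
\[
p_{\bf w}(L)\le |w_n|\,(8r_n+O(1))\le C\,L\,r_n .
\]
The last inequality uses $|w_n|\le L$.

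It remains to compare $r_n$ with $f(L)$. Since $f$ is weakly increasing and $r_n\le f(n)$ for large $n$, it suffices to show $n\le L$. This follows from $|w_n|\ge 2^{n}$, which holds because $|w_{m+1}|=|v_{r_m}|\,|w_m|\ge r_m|w_m|\ge 2|w_m|$ (recall $r_m\ge 2$). Thus $n\le \log_2 L\le L$, so $r_n\le f(n)\le f(L)$ and $p_{\bf w}(L)=O(Lf(L))$. Small $L$, below $|w_1|$ or below the threshold in \eqref{eq:rn}, contribute only a bounded constant.

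The main obstacle is the first reduction: justifying that a length-$L$ factor is genuinely encoded by an offset and a block word. This needs the unique desubstitution from Lemma~\ref{lem:f0}. In fact we only need the upper bound, so uniqueness is not essential. It suffices that every occurrence sits at some offset relative to the block factorization of ${\bf w}$, which is automatic. We also need to handle the edge case $L$ close to $|w_{n+1}|$, where a factor may span three level-$(n+1)$ blocks. I would try to absorb this into the constant by allowing up to three consecutive $v_{r_n}$-type blocks, giving at most $8\cdot 2r_n\cdot 3$ possibilities, which is still $O(r_n)$.
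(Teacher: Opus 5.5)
Your proof is correct and is essentially the paper's argument. You use the same level choice $|w_n|\le L<|w_{n+1}|$, the same count of an offset inside a $w_n$-block times a position inside a pair of $v_{r_n}$-type blocks (giving $O(L\,r_n)$), and the same geometric growth of $|w_n|$ to get $n\le L$. The only differences are presentational: you pass to ${\bf z}^{(n)}$ before cutting at level $n+1$ rather than after, and at the end you use monotonicity of $f$ where the paper uses monotonicity of $(r_n)$. Your three-block worry is also unnecessary, since $k=\lceil L/|w_n|\rceil+1\le |v_{r_n}|+1$ already forces at most two consecutive blocks, though allowing three only changes the constant.
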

\begin{proof}
Fix $n$ large enough that $n\ge|w_1|$, and let $i$ be the unique index with $|w_i|\le n<|w_{i+1}|$. Since ${\bf w}\in \{w_{i+1}, w_{i+1}'\}^{\omega}$ and $|w_{i+1}|,|w_{i+1}'|>n$, every length-$n$ subword of ${\bf w}$ is a subword of one of $w_{i+1}w_{i+1}$, $w_{i+1}w_{i+1}'$, $w_{i+1}'w_{i+1}$, $w_{i+1}'w_{i+1}'$. As there are at most $n$ starting positions where a length-$n$ subword can occur in an overlap, we see
$$p_{\bf w}(n)\le 4n + p_{w_{i+1}}(n)+p_{w_{i+1}'}(n)= 4n+2\,p_{w_{i+1}}(n),$$
where the last equality holds because $w_{i+1}'$ is obtained by applying a bijective coding to $w_{i+1}$.

Now $w_{i+1}=v_{r_i}(w_i,w_i')$. Let $c=\lceil n/|w_i|\rceil$. Viewing $w_{i+1}$ as a word over $w_i,w_i'$, we see that a length-$n$ subword of $w_{i+1}$ overlaps with at most $c+1$ blocks of $w_i$ and $w_i'$. Thus it is determined by a length-$(c+1)$ factor of $v_{r_i}$ together with a starting offset in $\{0,\ldots,|w_i|-1\}$ to indicate where in the first block our occurrence begins. The number of length-$(c+1)$ factors of $v_{r_i}$ is at most $|v_{r_i}|\le 2r_i$. Hence, using $|w_i|\le n$,
$$p_{w_{i+1}}(n)\le |w_i|\cdot 2r_i\le n\,2r_i.$$
Therefore $p_{\bf w}(n)\le 4n+2n(2r_i)=(4r_i+4)\,n$.

Finally, $|w_{j+1}|=|v_{r_j}||w_j|\ge 3|w_j|$, so $|w_i|\ge 3^{i}$; as $|w_i|\le n$, this gives $i\le \log_3 n\le n$, and since $(r_n)$ is weakly increasing, $r_i\le r_n$. Thus
$$p_{\bf w}(n)\le (4r_n+4)\,n.$$
By \eqref{eq:rn}, $r_n\le f(n)$ for $n$ sufficiently large, so $p_{\bf w}(n)\le (4f(n)+4)\,n=O(n f(n))$.
\end{proof}

\begin{proof}[Proof of Example \ref{exam:2}]
Lemmas \ref{lem:f1} and \ref{lem:f2} show that ${\bf w}$ has the desired properties.
\end{proof}

\section*{Acknowledgments}
We thank Jeffrey Shallit and Jean-Paul Allouche for many helpful comments and suggestions and in particular, we thank Jean-Paul Allouche for alerting us to the work of
Boshernitzan and the work of Balkov\'a and Pelantov\'a.

\end{document}